\newtheorem{definition}{Definition}
\newtheorem{theorem}{Theorem}
\newtheorem{lemma}{Lemma}
\newtheorem{corollary}{Corollary}
\newcommand\fd{\bra{\psi_g}\rho\ket{\psi_g}}
\newcommand\emp[1]{\widetilde{\mathbb{E}}[#1]}
\definecolor{darkblue}{rgb}{0.1,0.2,0.6}
\definecolor{crimson}{RGB}{164,16,52}
\definecolor{C0}{RGB}{58,119,175}
\definecolor{C1}{RGB}{239,133,54}
\def\@fnsymbol#1{%
  \ifcase#1\or
    *\or
    {\text{\Envelope}}\or
    \ddagger\or
    \mathsection\or
    \mathparagraph\or
    \|\or
    **\or
    \dagger\dagger\or
    \ddagger\ddagger
  \else\@ctrerr\fi}
\newif\ifincludesm
\begin{document}

\begin{CJK*}{UTF8}{gbsn} 
\title{Certifying Quantum States with Uniform Measurements}

\author{Liang Mao (毛亮)}
\thanks{These two authors contributed equally.}
\affiliation{Institute for Advanced Study, Tsinghua University, Beijing, 100084, China}
\affiliation{Institute for Quantum Information and Matter, California Institute of Technology, Pasadena, CA 91125, USA}
\author{Yifei Wang (王逸飞)}
\thanks{These two authors contributed equally.}
\affiliation{Institute for Advanced Study, Tsinghua University, Beijing, 100084, China}
\author{Yingfei Gu (顾颖飞)}
\affiliation{Institute for Advanced Study, Tsinghua University, Beijing, 100084, China}
\author{Chengshu Li (李成疏)}
\email[Corresponding author: ]{chengshu@mail.tsinghua.edu.cn}
\affiliation{Institute for Advanced Study, Tsinghua University, Beijing, 100084, China}
\date{\today}

\begin{abstract}
Qubit-resolved operations and measurements are required for most current quantum information processing schemes. However, these operations can be experimentally costly due to the need for local addressing, demanding significant classical control. A more resource-efficient alternative to extract information is uniform measurement, where a site-independent rotation of qubits is performed before measuring in the computational basis. This operation can be performed in parallel, or globally, in atom- and ion-based platforms, reducing resource cost and increasing fidelity. In this work, we initiate the exploration of the utility of this operation in quantum information processing. In particular, we demonstrate that uniform measurements can certify certain graph states, a family of highly entangled and broadly useful quantum states. We provide a sample-efficient certification algorithm with a proved performance guarantee, together with an experimental scheme based on  analog-mode Rydberg atom arrays. Uniform measurements, therefore, allow direct and efficient characterization of quantum states on quantum platforms in a hitherto unexplored manner. More broadly, our work establishes ``uniformity'' as a meaningful and practically motivated resource rubric for quantum information processing, and offers new insights into the architectural design of quantum computing devices.\\

\noindent\textbf{Keywords:} quantum state certification, uniform measurement, Rydberg atom array
\end{abstract}

\maketitle
\end{CJK*} 

\section{Introduction} 
Recent years have witnessed significant progress in the control of large quantum systems, ranging from large-scale simulation of quantum many-body systems~\cite{Bernien2017,semeghini2021probing,ebadi2021quantum,guo2024site,shao2024antiferromagnetic,iqbal2024non} to the early demonstration of quantum error correction~\cite{bluvstein2024logical,google2025quantum,bluvstein2025architectural}. Leveraging the advantages of controllability, flexibility and tunable interactions, the atom-~\cite{bloch2012quantum,Gross2017,jaksch2000fast,weimer2010rydberg,Saffman2010} and ion-based~\cite{haffner2008quantum,blatt2012quantum,bruzewicz2019trapped} platforms feature compelling application in quantum simulation, as well as serving as  promising candidates for quantum computation. 

However, challenges remain in extracting classical data from these platforms. While performing measurements is vital in quantum information processing, most known protocols are demanding and limited by experimental capabilities. 
When considering many practical quantum information tasks, such as observable estimation~\cite{Huang2020,aaronson2018shadow,huang2022provably,zhao2021fermionic,Wu2026}, state certification or verification~\cite{AGKE15,TM18,GKEA18,dangniamOptimalVerificationStabilizer2020,carrascoTheoreticalExperimentalPerspectives2021,gocaninSampleEfficientDeviceIndependentQuantum2022,HJM24,guptaFewSingleQubitMeasurements2025, Li2026universalefficient} or Hamiltonian learning~\cite{garrison2018does,qi2019determining,bairey2019learning,li2020hamiltonian,anshu2021sample,haah2022optimal}, the state-of-the-art protocols still require randomized measurements~\cite{Elben202203}, demanding qubit-resolved operations.
These operations require significant classical control over individual atoms or ions, as illustrated in Fig.~\ref{fig:intro}(a). On both platforms, one needs to either engineer the control pulse spatial distribution or change the atom positions, hence inevitably increasing resource cost and infidelity.
This motivates us to consider more efficient protocols that take advantage of atomic or ionic platforms on a hardware level.

\begin{figure}[!t]
    \centering
    \includegraphics{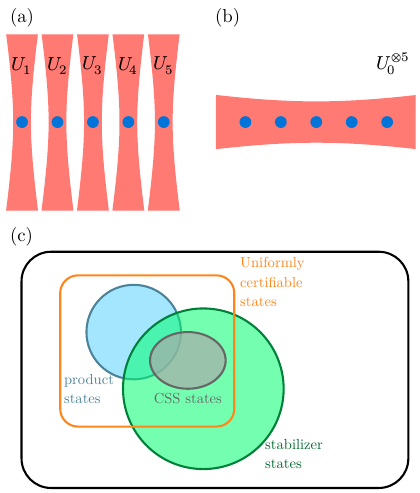}
    \caption{Quantum state measurements and hierarchy of states. (a) General quantum state measurements. Depending on the scheme, one first applies some unitary operations before measuring in the $Z$-basis. (b) Uniform measurements. In this case, we restrict the unitary operations to be uniform. Uniformly certifiable states are those that can be verified from uniform measurements. (c) The relation of uniformly certifiable states with several important types of states. We show that both product states and Calderbank--Shor--Steane (CSS) states are uniformly certifiable. Furthermore, there are interesting stabilizer states beyond the CSS type that are also uniformly certifiable.}
    \label{fig:intro}
\end{figure}

In this paper we propose a novel protocol for quantum state certification, leveraging the experimental feasibility of parallel operations on atomic platforms. Our protocol  utilizes only \textit{uniform measurements}, i.e., single-qubit projection measurements along an arbitrary but qubit-independent direction. Experimentally, this is tantamount to first applying a uniform single-qubit rotation prior to a computational basis measurement, which can be achieved by applying a global laser pulse on all the atoms or ions simultaneously, see Fig.~\ref{fig:intro}(b). 

Despite its immediate experimental relevance~\cite{vankirk2022,Mark2025}, this setup has not been extensively explored. Here we find that uniform measurements are capable of certifying certain quantum states.  Among them, an especially interesting example is a family of \textit{graph states}, which serve as the resource for measurement-based quantum computation~\cite{raussendorf2003measurement,briegel2009measurement}. Preparation and verification of such states have been a central target of current experiments, as demonstrated on platforms including the photon--quantum dot system~\cite{Schwartz2016}, the Rydberg atom array~\cite{Bluvstein2022}, and the photon--atom system~\cite{Thomas2022}. In these and related works, verification typically relies on stabilizer measurements that are technically demanding. Our protocol replaces them with uniform global operations, providing an immediately friendlier alternative to these platforms.

In the following, we first characterize the capability of uniform measurements. After a few illuminating simple examples, we discuss how to certify certain graph states. We provide a sample-efficient certification algorithm with a proved performance guarantee, using only uniform measurements. We also propose an experimental scheme based on the Rydberg atom array operated in analog mode. We conclude with a discussion of the broader architectural relevance and possible further extensions.

\section{Quantum state certification} 
Given a quantum state ${\rho}$ synthesized in the lab, quantum state certification is the task of certifying whether the fidelity $\bra{\psi}{\rho}\ket{\psi}$ to a target state $\ket{\psi}$ is sufficiently close to one. For a generic $\ket{\psi}$, the projector $\ket{\psi}\bra{\psi}$ is a highly complicated non-local operator, which makes it difficult to directly measure its expectation value. In modern certification protocols designed for a fully-controllable quantum computer,  global unitary operations or qubit-resolved random measurements are usually necessary~\cite{AGKE15,TM18,GKEA18,dangniamOptimalVerificationStabilizer2020,carrascoTheoreticalExperimentalPerspectives2021,gocaninSampleEfficientDeviceIndependentQuantum2022,HJM24,guptaFewSingleQubitMeasurements2025, Li2026universalefficient}.

In this paper, we deal with a special scenario: $\ket{\psi}$ is the ground state of some sparse Hamiltonian with a known energy~\cite{jones2005parts,huber2016characterizing,karuvade2019uniquely,yu2023learning}, such as the ground states of physically local Hamiltonians or the stabilizer states. For these states, the parent Hamiltonian ${H}$ is a sum of local (or low complexity) terms ${H}=\sum_i{h}_i$. Then $\ket{\psi}$ can be efficiently certified by measuring each ${h}_i$ and comparing the results to the ground state energy. Note that despite its simplicity, estimating ${h}_i$ still typically requires qubit-resolved operations and measurements \cite{cotler2020quantum,bonet2020nearly,garcia2020pairwise}. We will see how uniform measurements help to certify a class of these states in the following.

\section{Uniform measurements}
Before demonstrating the utility of uniform measurements for certification, we now characterize their capacity. 
\begin{definition}
    [Uniform measurements]
    We call a quantum measurement a uniform measurement if it applies a global parallel single-qubit rotation $\bigotimes_{i=1}^n U$ before doing the computational basis measurement.
\end{definition}

For the sake of illustration,  we first restrict ourselves to uniform measurements of qubits in the $x-z$ plane. Generalization to an arbitrary direction on the Bloch sphere is straightforward.
Uniform measurement in the $x-z$ plane can be performed by measuring in the $Z$ basis after a site-independent rotation in the $x-z$ plane $U=\prod_{i=1}^N \exp(-\mathrm{i}Y_i\theta/2)$ on all the $N$ qubits,
and reading measurement outputs from any subset $I\subseteq \{1,...,N\}$. This gives the expectation value
\begin{equation}
\left\langle \bigotimes_{i\in I}(\cos\theta Z_i+\sin\theta X_i)\right\rangle,
\end{equation}
Expanding the expression, we get a linear combination of expectation values of Pauli strings, each of which is a symmetrized summation of a product of $\alpha$ $Z$ operators and $(|I|-\alpha)$ $X$ operators,
\begin{equation}\label{sym}
E(\alpha;I) = \left\langle \sum_{\substack{J\subseteq I\\|J|=\alpha}}\bigotimes_{\substack{j\in J\\k\in I\backslash J}} Z_j X_k\right\rangle,\quad \alpha=0,...,|I|,
\end{equation}
weighted by a factor $\cos^{\alpha}\theta\sin^{|I|-\alpha}\theta$.
Then by repeating with $|I|+1$ different $\theta$'s,
one can obtain all the expectation values $E(\alpha;I)$.
Taking $I=\{1,2,3\}$ as an example, what can be measured is $E(0;I) = \langle X_1X_2X_3\rangle, E(1,I) = \langle X_1X_2Z_3+X_1Z_2X_3+Z_1X_2X_3\rangle, E(2;I) = \langle X_1Z_2Z_3+Z_1X_2Z_3+Z_1Z_2X_3\rangle$, and $E(3;I)=\langle Z_1Z_2Z_3\rangle$.

Taking into account all three directions $x,y,z$, one can estimate the expectation values of all the symmetrized operators with uniform measurements. 
\begin{definition}
    [Symmetrized operators]
    An operator $O$ is called a symmetrized operator if on its support $S\subseteq[n]$, $O$ commutes with all the symmetric group elements $\pi\in\mathcal{S}_{|S|}$.
\end{definition}\noindent
For any size-$k$ subset $I$ of $N$ qubits, the number of independent symmetric operators is $\binom{k+2}{2}$.
So the total number of independent operators one can evaluate is
\begin{equation}
\begin{split}
\mathcal{N}(N)&{}=\sum_{k=1}^{N}\binom{N}{k}\binom{k+2}{2}\\
&{}=2^{N-3}(N^2+7N+8)-1.   
\end{split}
\end{equation}
Note that $\mathcal{N}(N)<4^N-1$ except for $N=1$. So we do not expect enough information for a full tomography of a generic density matrix. However, we will see that much can be learned from only uniform measurements.

\section{Elementary examples}

\textbf{Example 1}. Any direct product state can be certified with uniform measurements.  For a product state $\ket{\psi}=\otimes_{i=1}^N\ket{\psi_i}$, we only need to estimate the fidelity for every $\ket{\psi_i}$. Because $\ket{\psi_i}\bra{\psi_i}$ can be expanded with Pauli operators, $\bra{\psi_i}{\rho}\ket{\psi_i}$ can be estimated by uniform measurements through $x,y,z$ directions and analyzing the results for each qubit separately.

Since we will mostly focus on stabilizer states below, we wish to point out first that uniform measurements have capability beyond such states. More generally, we have the following result.

\textbf{Proposition 1.} If two states $\psi_1,\psi_2$ defined on different qubits can each be uniformly certified, then so can the product state $\psi_1\otimes\psi_2$. This follows directly from the definition.

\textbf{Example 2.} Calderbank--Shor--Steane (CSS) states~\cite{Calderbank1996,Steane1996}. These states are stabilizer states where the generators can be chosen to be products of either only $Z$ or only $X$ operators. 
Such states lie at the heart of current quantum error correction schemes.
The CSS states can be certified by first performing uniform measurements along $x$ and $z$ directions, and then estimating each stabilizer.
This example shows that uniform measurements are also capable of capturing a rich entanglement structure. 

\begin{figure}
    \centering
    \includegraphics[]{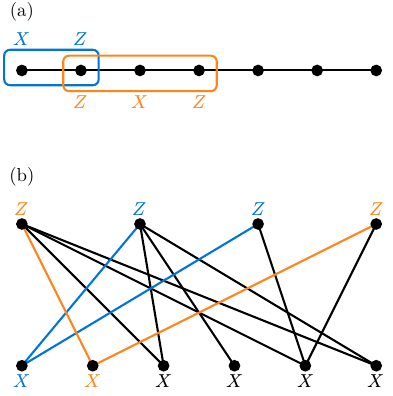}
    \caption{Illustration for graph states. 
    (a) A graph state on a 1D chain, also known as the 1D cluster state. We mark the stabilizer on the left boundary $X_1Z_2$ in blue and a bulk stabilizer $Z_2X_3Z_4$ in orange.
    (b) A subclass of bipartite graph states, where every vertex on the upper side has an even number of adjacent edges. We mark two stabilizers in blue and orange respectively. The product of all $X$ operators on the lower side is in the stabilizer group. 
    }
    \label{illustration}
\end{figure}

\section{Graph states} 
In the above examples, the stabilizers of the states all have a symmetric structure on their supports. Surprisingly, the certification capability of uniform measurements goes beyond stabilizer states with only symmetric stabilizers. We find that all bipartite graph states with even degrees on one partition can be certified with uniform measurements. As an illustration, we demonstrate this with an example of one-dimensional (1D) graph state below.

The 1D graph state $\ket{\psi_g}$ is depicted in Fig.~\ref{illustration}(a) For an $N$-qubit chain, $\ket{\psi_g}$ is stabilized by $Z_{i-1}X_iZ_{i+1}$ for $2\leq i\leq N-1$, together with $X_1Z_2$ and $Z_{N-1}X_N$ on the boundaries. Here we use $Z_i$ and $X_i$ to denote Pauli operators on the $i$th qubit. $N$ is chosen to be an odd number, so that any qubit $i\in\text{even}$ has an even degree. Note that the one-dimensional graph state without boundary stabilizers serves as a symmetry-protected topological phase~\cite{Son2012} and a resource state for measurement-based quantum computation~\cite{Raussendorf2001}, and experimental realizations thereof have been actively pursued~\cite{Larsen2019,Cao2023,Ferreira2024}. Our certification protocol easily generalizes to it. Here we keep the boundary terms for the sake of generality.

Assume that the synthesized state in the lab matches exactly the 1D graph state, ${\rho}=\ket{\psi_g}\bra{\psi_g}$. One can verify it through the following procedures. First, note that ${\rho}$ has a $\mathbb{Z}_2$ symmetry generated by $U_X=\prod_{i\in\text{odd}}X_i$, whose existence can be verified by uniform measurements. Next, one estimates the expectation values of bulk stabilizers by measuring 
\begin{align}\label{stabilizer1}
    M_i=Z_{i-1}X_iZ_{i+1}+Z_{i-1}Z_iX_{i+1}+X_{i-1}Z_iZ_{i+1}
\end{align}
for $2\leq i\leq N-1$, and boundary stabilizers by measuring
\begin{align}\label{stabilizer2}
    M_1=X_1Z_2+Z_1X_2,\quad M_N=X_{N-1}Z_N+Z_{N-1}X_N,
\end{align}
through uniform measurements (Eq.~\eqref{sym}). $Z_{i-1}Z_iX_{i+1}$, $X_{i-1}Z_{i}Z_{i+1}$, $Z_1X_2$ and $X_{N-1}Z_N$ all anticommute with $U_X$, so their expectation values should all be zero, according to the following lemma.
\begin{lemma}\label{lm1}
    Let $\mathcal{P}_1,\mathcal{P}_2\in \{I,X,Y,Z\}^{\otimes N}$ be two Pauli strings and $\{\mathcal{P}_1,\mathcal{P}_2\}=0$. Then their expectation values $\braket{\mathcal{P}_i}=\operatorname{Tr}({\rho}{\mathcal{P}}_i)$ with any quantum state ${\rho}$ should satisfy $\braket{\mathcal{P}_1}^2+\braket{\mathcal{P}_2}^2\leq1$.
\end{lemma}\noindent
For a single-qubit state, $\braket{X}^2+\braket{Z}^2\leq1$ always holds. A general proof can be found in the Supplementary Materials. From the lemma, the expectation values of stabilizers in Eqs.~\ref{stabilizer1} and \ref{stabilizer2} should all be exactly one. This verifies ${\rho}=\ket{\psi_g}\bra{\psi_g}$.

\begin{algorithm}[t!]
	\caption{Certifying the 1D graph state}
	\label{1}
    \SetKwInOut{Input}{Input}    
\SetKwInOut{Output}{Output} 
\Input{	$3T$ samples of unknown quantum state $\rho$.}
\Output{\textsc{Certified} or \textsc{Failed} depending on the fidelity $\braket{\psi_g|\rho|\psi_g}$ (See Theorem \ref{thm: main}).}

\BlankLine

 Take uniform measurements along the basis $X^{\otimes N}$, $\big(\frac{X+Z}{\sqrt{2}}\big)^{\otimes N}$, and $\big(\frac{X-Z}{\sqrt{2}}\big)^{\otimes N}$ for $T$ times each. Collect the measurement outcomes $b_x^{i,t},b_{xz+}^{i,t},b_{xz-}^{i,t}=\pm 1$ of each measurement event of each qubit, for qubits $i=1,2,\cdots N$ and measurement events $t=1,2,\cdots T$.

 Estimate $\braket{U_X}$ by the empirical average $\emp{U_X}=\frac{1}{T}\sum_{t=1}^T \prod_{i\in\text{odd}}b_x^{i,t}$. If $\emp{U_X}<1-13\epsilon/4$, output \textsc{Failed}.

 Estimate $\braket{M_i}$ by the empirical average
\begin{align*}
    \emp{M_i}&=\frac{1}{T}\sum_{t=1}^T\Big[\sqrt{2}\big(b_{xz+}^{i-1,t}b_{xz+}^{i,t}b_{xz+}^{i+1,t} 
    +\notag\\
    &\quad b_{xz-}^{i-1,t}b_{xz-}^{i,t}b_{xz-}^{i+1,t} \big)
    -b_x^{i-1,t}b_x^{i,t}b_x^{i+1,t}\Big]
\end{align*}
for $2\leq i\leq N-1$, and 
\begin{align*}
    \emp{M_1}&=\frac{1}{T}\sum_{t=1}^T\Big[b_{xz+}^{1,t}b_{xz+}^{2,t}-b_{xz-}^{1,t}b_{xz-}^{2,t} \Big]\notag\\
    \emp{M_N}&=\frac{1}{T}\sum_{t=1}^T\Big[b_{xz+}^{N-1,t}b_{xz+}^{N,t}-b_{xz-}^{N-1,t}b_{xz-}^{N,t} \Big]
\end{align*}
If $\emp{M_i}> 1-9\sqrt{\epsilon}$ for all $1\leq i\leq n$, output \textsc{Certified}. Otherwise output \textsc{Failed}.
\end{algorithm}

More generally, when the fidelity $\bra{\psi_g}{\rho}\ket{\psi_g}$ is less than one, the estimated values of Eqs.~(\ref{stabilizer1}, \ref{stabilizer2}) can be bounded with Lemma \ref{lm1}.
Building on the analysis, we develop a general certification algorithm Algo.~\ref{1} for the one-dimensional graph state that works for finite fidelity, using only uniform measurements. The algorithm has a proved performance guarantee: 
\begin{theorem}[]\label{thm: main}
	For any $\epsilon<1/(64N^2)$, Algorithm \ref{1} outputs \textsc{Certified} with probability at least $2/3$ when $\fd>1-\epsilon$, and outputs \textsc{Failed} with probability at least $2/3$ when $\fd<1-8N\sqrt{\epsilon}$, using a number of samples at most $T=\mathcal{O}\big(1/\epsilon^2\big)$.
\end{theorem}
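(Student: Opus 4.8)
The plan is to control the fidelity $\fd$ entirely through the handful of expectation values that Algorithm~\ref{1} estimates, and then to pass from exact expectation values to empirical averages by a concentration argument. I would begin by recording two deterministic facts about stabilizer fidelity. Writing $g_1=X_1Z_2$, $g_i=Z_{i-1}X_iZ_{i+1}$ for $2\le i\le N-1$, $g_N=Z_{N-1}X_N$ for the $N$ generators and $P=\prod_k\frac{I+g_k}{2}$ for the projector onto $\ket{\psi_g}$, the elementary operator inequality $I-P\preceq\sum_k\frac{I-g_k}{2}$ (valid since the $\frac{I+g_k}{2}$ commute) gives the one-sided bound $1-\fd\le\frac12\sum_k(1-\langle g_k\rangle)$, while $\langle g_k\rangle=2\langle\tfrac{I+g_k}{2}\rangle-1\ge 2\fd-1$ gives the reverse. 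Thus large fidelity forces every $\langle g_k\rangle$ close to $1$, and conversely all $\langle g_k\rangle$ close to $1$ forces large fidelity, at the cost of an $N$-fold factor.

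The point where uniform measurements enter is that the algorithm does not access $\langle g_i\rangle$ directly but the symmetrized operator $M_i=E(2;\{i-1,i,i+1\})=g_i+A_i+B_i$, where $A_i=Z_{i-1}Z_iX_{i+1}$ and $B_i=X_{i-1}Z_iZ_{i+1}$ are the two ``wrong'' terms. I would verify that the bases $X^{\otimes N}$ and $(\tfrac{X\pm Z}{\sqrt2})^{\otimes N}$ recombine into $\emp{M_i}$, an unbiased estimator of $E(2;\cdot)$ with bounded summands (range $2\sqrt2+1$ in the bulk, $2$ on the boundary), and that $\emp{U_X}$ is an unbiased estimator of $\langle U_X\rangle$ with range $1$. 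The key observation is that each of $A_i,B_i$ anticommutes with the stabilizer symmetry $U_X=\prod_{i\in\mathrm{odd}}X_i$; applying Lemma~\ref{lm1} to the anticommuting pairs $(A_i,U_X)$ and $(B_i,U_X)$ yields $\langle A_i\rangle^2\le 1-\langle U_X\rangle^2\le 2(1-\langle U_X\rangle)$, hence $|\langle A_i\rangle|,|\langle B_i\rangle|\le\sqrt{2(1-\langle U_X\rangle)}$. So whenever $\langle U_X\rangle$ is close to $1$, the measured $\langle M_i\rangle$ and the target $\langle g_i\rangle$ differ by at most $2\sqrt{2(1-\langle U_X\rangle)}=O(\sqrt\epsilon)$.

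With these ingredients the two halves follow. For completeness ($\fd>1-\epsilon$) I use $\langle U_X\rangle\ge 2\fd-1>1-2\epsilon$ and $\langle g_i\rangle>1-2\epsilon$, so $\langle M_i\rangle\ge 1-2\epsilon-O(\sqrt\epsilon)$, and a Hoeffding bound makes every empirical average clear its threshold. For soundness I argue the contrapositive: if the output is \textsc{Certified} then $\emp{U_X}\ge 1-13\epsilon/4$ and every $\emp{M_i}>1-9\sqrt\epsilon$; conditioning on the concentration event, the true $\langle U_X\rangle,\langle M_i\rangle$ are close to these, Lemma~\ref{lm1} converts the $\langle M_i\rangle$ bounds into $1-\langle g_i\rangle=O(\sqrt\epsilon)$, and summing over the $N$ generators via $1-\fd\le\frac12\sum_i(1-\langle g_i\rangle)$ gives $\fd>1-8N\sqrt\epsilon$, contradicting the hypothesis. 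The remaining step is concentration bookkeeping: a Hoeffding inequality for each of the $N+1$ bounded estimators combined by a union bound, where additive accuracy $O(\epsilon)$ on $\emp{U_X}$ — the tightest demand, set by the $13\epsilon/4$ threshold — is exactly what forces $T=\mathcal{O}(1/\epsilon^2)$; the hypothesis $\epsilon<1/(64N^2)$ guarantees $1-8N\sqrt\epsilon>0$ so the two fidelity regimes are genuinely separated.

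I expect the main obstacle to be the second step: the estimator $\emp{M_i}$ reports only the symmetric combination $E(2;\cdot)$, which mixes the true stabilizer $g_i$ with the symmetry-odd operators $A_i,B_i$, and the whole argument hinges on suppressing those terms. Achieving an $O(\sqrt\epsilon)$ rather than $O(1)$ error depends on simultaneously certifying $\langle U_X\rangle\approx 1$ and feeding it through Lemma~\ref{lm1}, and keeping the accumulated constants consistent across the $U_X$-check threshold, the $M_i$-check threshold, the Lemma~\ref{lm1} loss and the $N$-fold fidelity sum is the delicate part that pins down the precise numbers $13\epsilon/4$, $9\sqrt\epsilon$ and $8N\sqrt\epsilon$.
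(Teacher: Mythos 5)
Your proposal is correct and follows the same overall route as the paper: decompose the measured symmetrized operator $M_i$ into the true generator plus the two symmetry-odd terms, suppress the latter via Lemma~\ref{lm1} applied against $U_X$, convert generator expectations to fidelity through the commuting-projector (parent-Hamiltonian) inequality, and finish with Hoeffding plus a union bound. The one genuine point of divergence is your second deterministic fact: for ``high fidelity $\Rightarrow$ every $\langle g_k\rangle$ close to $1$'' you use the operator domination $\tfrac{I+g_k}{2}\succeq\ket{\psi_g}\bra{\psi_g}$ to get $\langle g_k\rangle\ge 2\fd-1$, i.e.\ an error linear in $1-\fd$, whereas the paper's Lemma~\ref{lm2} reaches this via the Fuchs--van de Graaf inequality and only obtains $1-2\sqrt{\epsilon}$. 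Your version is both more elementary and strictly tighter (it also subsumes the paper's Lemma~\ref{lm3}, since $U_X$ lies in the stabilizer group), and it still clears the algorithm's $13\epsilon/4$ and $9\sqrt{\epsilon}$ thresholds with room to spare; the $\sqrt{\epsilon}$ loss in the final bound then comes solely from Lemma~\ref{lm1}, exactly as in the paper. The remaining bookkeeping (contrapositive soundness, $1-\fd\le\tfrac12\sum_k(1-\langle g_k\rangle)$ giving the $8N\sqrt{\epsilon}$ gap, $O(\epsilon)$ accuracy on $\emp{U_X}$ forcing $T=\mathcal{O}(1/\epsilon^2)$) matches the paper's argument step for step.
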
\noindent
The proof can be found in the Supplementary Materials.

This certification algorithm works far beyond the 1D systems. A general graph state on a graph $G=\{V,E\}$ is defined to be stabilized by $X_{v}\prod_{(v,v')\in E}Z_{v'}$ for all $v\in V$, as illustrated in Fig.~\ref{illustration}(b).  Consider $G$ to be a bipartite graph with partitions $A$ and $B$, and assume the degrees of $v\in A$ are always even. Then $U_X=\prod_{v\in B}X_v$ is a symmetry of the graph state. Define $V'(v)=\{v'|(v,v')\in E\}\cup\{v\}$ to be the neighbors of $v$ and itself. The expectation values of all the operators
\begin{align}
    \sum_{v'\in V'(v)}\Big(X_{v'}\prod_{v''\in V'(v)\smallsetminus\{v'\}}Z_{v''}\Big),\quad v\in V
\end{align}
can be bounded using $U_X$ and Lemma \ref{lm1}. So Algo.~\ref{1} generalizes directly to these cases. We note that in general if the stabilizer term is supported on $L$ qubits, it generally requires measurement of $L$ distinct symmetrized operators. However, in most physically relevant settings, including 1D and 2D lattice graphs, $L$ is a small constant independent of system size. Even for more complex constructions such as expander graphs, the degree (and hence $L$) remains bounded by a constant. So the $L$-dependence does not affect the efficiency of our algorithm.

\begin{figure}
    \centering
    \includegraphics[width=0.9\linewidth]{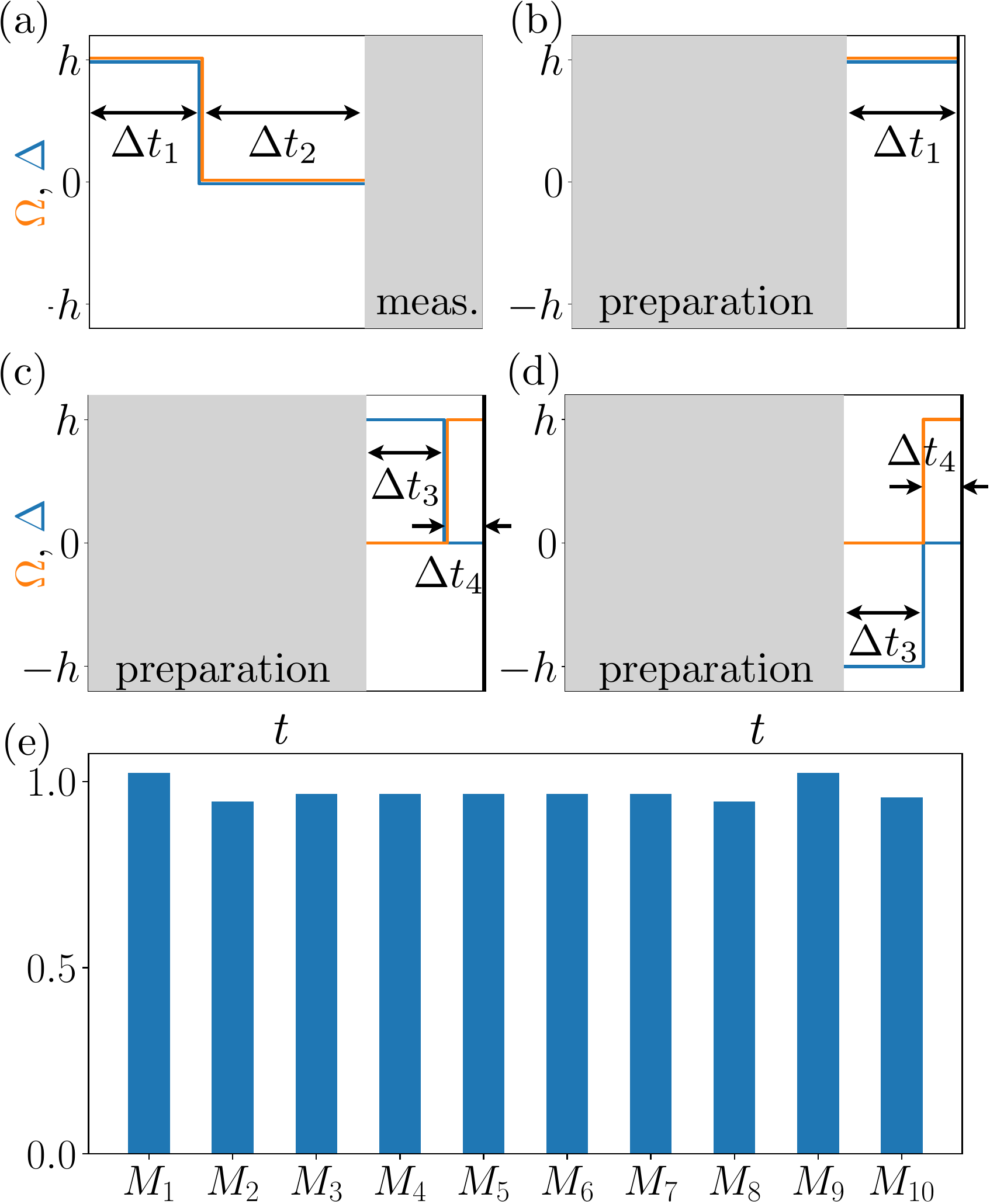}
    \caption{Experimental scheme in a Rydberg atom chain with an odd number of sites and open boundary conditions. In the numerical simulation we take $N=9$. (a--d) The time sequence of preparing and measuring the graph state. The measurement is performed along the (b) $x$, (c) $(x+z)/\sqrt{2}$, and (d) $(x-z)/\sqrt{2}$ directions. Blue and orange lines denote the detuning $\Delta(t)$ and the Rabi frequency $\Omega(t)$ respectively. The time intervals are $\Delta t_1=1/(2\sqrt{2}h)$, $\Delta t_2=1/2$, $\Delta t_3=1/(4h)$, $\Delta t_4=1/(8h)$. The vertical black line denotes the moment of measurement. (e) The measurement results for $M_i,i=1,\ldots,10$. We take $h=20C_6$ in all calculations.}
    \label{fig:rydberg}
\end{figure}

\section{Experimental scheme in a Rydberg atom array}
Now we propose an experimental scheme to prepare and certify the 1D graph state, using our certification protocol. We first discuss a unitary evolution that prepares the 1D graph state. Previously, such states have been prepared in the quantum computation modes~\cite{Bluvstein2022}, namely within
two hyperfine levels, in the Rydberg atom array where Rydberg interaction is used to perform a two-qubit gate~\cite{Saffman2010}. The current protocol, however, works exclusively in the quantum simulation mode.

The Hamiltonian of the system reads
\begin{equation}
\begin{split}
H/2\pi={}&\frac{\Omega(t)}{2}\sum_i(|r_i\rangle\langle g_i|+|g_i\rangle\langle r_i|) - \Delta(t)\sum_i n_i\\
&{}+\sum_{i<j}\frac{C_6}{|i-j|^6}n_i n_j,
\end{split}\end{equation}
where $|g,r\rangle$ denote the ground and Rydberg states respectively. The population of the ground state atoms can be measured by imaging the system after ionizing the Rydberg atoms. The two states serve as the spin degree of freedom of the graph state. $n=|r\rangle\langle r|=(1-Z)/2$ is the Rydberg state number operator. $\Omega(t)$ and $\Delta(t)$ are the Rabi frequency and detuning, and $C_6$ measures the strength of the van der Waals interaction. The system is defined on an $N$-site open chain with $N$ being odd. We take $C_6=1$ as the energy scale. Starting from the initial state $\ket{g}^{\otimes N}$ , we first rotate by an angle of $\pi$ along the $(1,0,1)$-direction, preparing the state $\ket{+}^{\otimes N}=[(\ket{g}+\ket{r})/2]^{\otimes N}$. In terms of Rydberg operators, we take $\Omega=\Delta=h$ for a large value $h$ and hold for $t=1/(2\sqrt{2}h)$. For a realistic Rydberg system where $h$ is a few MHz, we need to choose the nearest-neighbor interaction to be around a hundred kHz, corresponding to a lattice constant of 7\si{\micro m} for two Rb atoms excited to the 50$s$ state. Then we turn off $\Delta$ and $\Omega$, and the Rydberg interaction entangles the state, leading to the 1D graph state. Indeed, the Rydberg interaction results in a nearest neighbor C$Z$ gate when applied for a time interval $t=1/2$, and $\mathrm {C}Z_{ij} X_j \mathrm{C}Z_{ij}=Z_i X_j$. The time sequence of this preparation scheme is shown in Fig.~\ref{fig:rydberg}(a).

Once the 1D graph state is prepared, we can perform Algo.~\ref{1} to certify it. The protocol involves collecting $N+1$ expectation values, namely $\braket{U_X}=\braket{\prod_{i\in\text{odd}}X_i}$ and that of Eqs.~(\ref{stabilizer1}, \ref{stabilizer2}). In order to estimate these expectation values, all we need is to measure in three different directions, namely $x$, $(x+z)/\sqrt{2}$, and $(x-z)/\sqrt{2}$. They require a further rotation before measuring each atom, as shown in Fig.~\ref{fig:rydberg}(b--d). After a straightforward calculation, the $N+1$ measurement results can be obtained. 

Note that while general uniform measurements of a weight-$|I|$ cluster require $|I|+1$ independent measurements, only $|I|=3$ are needed in the above example. The reason is that this combination ($XZZ+ZXZ+ZZX$) only involves a fixed odd number of $X$'s (one in this case) in each term. All such combinations can be measured by one less than the general case. To see this, we consider a general $|I|$ and measure in angles of (a) $\theta=n\pi/|I|$ and $\theta=\pi-n\pi/|I|$ in pairs ($n\in[\lfloor|I|/2\rfloor]$, (b) $\theta=0$ for all $|I|$, and (c) $\theta=\pi/2$ for odd $|I|$. These are $|I|+1$ independent measurements. Now, to obtain any combination involving a fixed odd number of $X$'s, one only needs (a) or (a, c) depending on the parity of $|I|$. This is because there are $\lfloor(|I|+1)/2\rfloor$ such combinations and hence we need the same number of equations. The measurements in (a) result in $\lfloor|I|/2\rfloor$ equations, which suffice for even $|I|$ and need one more equation for odd $|I|$, both giving the total count of $|I|$. 

We numerically illustrate the validity of our experimental scheme through a $N=9$ system, where the results can be found
 in Fig.~\ref{fig:rydberg}(e). Here we label $M_1=\langle X_1 Z_2+Z_1 X_2\rangle$, $M_9=\langle Z_{8}X_{9}+X_{8}Z_9\rangle$ and $M_i=\braket{Z_{i-1}X_iZ_{i+1}+X_{x-1}Z_iZ_{i+1}+Z_{i-1}Z_iX_{i+1}}$ for $2\leq i\leq 8$. $M_{10}$ denotes $\braket{U_X}$.
 All values are close to unity, as expected. The errors come from $h$ being finite and Rydberg interactions beyond the nearest neighbor.  

\section{Discussions and conclusions}

In this work, motivated by an attempt to alleviate classical control, we propose to use uniform measurements to certify quantum states. We characterize the capacity of uniform measurements, and illustrate its rich behavior through several examples. In particular, for the non-CSS case of the graph state, we propose a rigorous certification algorithm as well as an experimental protocol based on the analog-mode Rydberg atom array. This proposal relies on standard techniques on the platform and is very much within current experimental capacities. 

While we present the graph states as a concrete example, uniformly certifiable states may be much more general. However, finding all the uniformly certifiable states seems to be a challenging task, for they lack a good mathematical structure. We leave this exciting and useful question for further explorations.

Our work offers new insight into the architectural design of quantum computing devices. Modular and distributed quantum architectures are now being demonstrated~\cite{bluvstein2025architectural}. In such settings, different zones of a device may be tasked with state preparation (“manufacturing”), computation (“operating”), and quality control (“verifying”). Our scheme naturally supports this division: the verifying zone can operate without local addressing, enabling the reallocation of scarce resources (e.g., optical components, lasers) to the operating zone and improving overall resource-efficiency.

Broadly speaking, the complexity of quantum computation involves multiple rubrics, with the number of qubits and gates being the most basic~\cite{Watrous2009}. Upon more careful inspection, depending on the nature of the quantum platform, one also expects certain operations to be more resource-efficient than others. Generally, we expect that uniform measurements will lead to a significant improvement of resource cost across various platforms. This might be quantified by a more nuanced definition of complexity, and can be used to systematically improve the development of quantum subroutines. The current work initiates an early step in this direction.

An important practical aspect of quantum state verification is error localization. In our framework, uniform measurements restrict accessible observables to symmetrized operators, which limits spatial resolution. As a result, the protocol focuses on certifying global properties such as fidelity, rather than pinpointing errors at the single-qubit level. Nevertheless, coarse-grained diagnostic information can still be obtained. For graph states, the measured quantities ($M_i$) are associated with local stabilizer structures. The deviations from their ideal values can indicate the presence of errors in corresponding regions. Achieving finer error localization would likely require going beyond strictly uniform measurements, for example by incorporating limited local control or combining multiple measurement settings. Exploring such extensions is an interesting direction for future work.\\

\noindent
{\bf Acknowledgments}
We thank Hsin-Yuan Huang, Yadong Wu, Pengfei Zhang, Yihui Quek, Yi-Zhuang You, Haimeng Zhao, Hong-Ye Hu, and Huangjun Zhu for helpful discussions. \\

\noindent
{\bf Funding}
This work is supported by Quantum Science and Technology-National Science and Technology Major Project under Grant No.~2025ZD0300400, National Natural Science Foundation of China Grant No.~12504307 (C.L.) and No. 12575022 (Y.G.), the National Key R\&D Program of China under Grant No.~2023YFA1406702, the Scientific Research Innovation Capability Support Project for Young Faculty under Grant No.~SRICSPYF-ZY2025157, the Shanghai Committee of Science and Technology under Grant No.~25LZ2600800, and Tsinghua University Dushi program (C.L., Y.G.). \\

\noindent
{\bf Availability of data and materials}
The codes and data for our numerical calculations are available at \url{https://github.com/chengshul/Uniform_measurement}.\\



\bibliography{biblio}

\onecolumngrid

\clearpage
\subsection*{\large Supplementary Materials for \\ ``Certifying Quantum States with Uniform Measurements''}
\normalsize
\setcounter{equation}{0}
\setcounter{figure}{0}
\setcounter{table}{0}
\setcounter{section}{0}
\setcounter{page}{1}
\renewcommand{\theequation}{S\arabic{equation}}
\renewcommand{\thefigure}{S\arabic{figure}}
\renewcommand{\thetable}{S\arabic{table}}
\renewcommand{\bibnumfmt}[1]{[S#1]}

\makeatletter

\twocolumngrid

\section{Rigorous proof of algorithm performance}
\setcounter{lemma}{0}
\setcounter{theorem}{0}
In this section, we provide a rigorous proof of the performance of our certification algorithm. The proof replies on several technical prerequisites.

First, we show that the expectation values of two anticommuting Pauli strings are bounded
\begin{lemma}\label{lm1}
    Let $\mathcal{P}_1,\mathcal{P}_2\in \{I,X,Y,Z\}^{\otimes N}$ be two Pauli strings and $\{\mathcal{P}_1,\mathcal{P}_2\}=0$. Then their expectation values $\braket{\mathcal{P}_i}=\operatorname{Tr}({\rho}{\mathcal{P}}_i)$ with any quantum state ${\rho}$ should satisfy $\braket{\mathcal{P}_1}^2+\braket{\mathcal{P}_2}^2\leq1$.
\end{lemma}
\begin{proof}
    For any real number $\alpha$, define an Hermitian operator $A_\alpha = \mathcal{P}_1 + \alpha \mathcal{P}_2$.
    The fluctuation of $A_\alpha$ must be non-negative, that is
    $\braket{(A_\alpha - \braket{A_\alpha})^2} \geq 0$ for any $\alpha\in \mathbb{R}$.
    Expanding the inequality we have
    \begin{equation}
        \alpha^2\left(1-\braket{\mathcal{P}_2}^2\right) - 2\alpha\braket{\mathcal{P}_1}\braket{\mathcal{P}_2} + \left(1-\braket{\mathcal{P}_1}^2\right) \geq 0
    \end{equation}
    for any $\alpha\in \mathbb{R}$.
    The condition for this to hold is that its discriminant is non-positive,
    \begin{equation}
        \left(2\braket{\mathcal{P}_1}\braket{\mathcal{P}_2}\right)^2 - 4\left(1-\braket{\mathcal{P}_2}^2\right)\left(1-\braket{\mathcal{P}_1}^2\right) \leq 0.
    \end{equation}
    This straightforwardly leads to $\braket{\mathcal{P}_1}^2+\braket{\mathcal{P}_2}^2\leq1$.
\end{proof}
Note that this inequality can also be derived from the Cauchy--Schwarz inequality between $\mathcal{P}_1-\braket{\mathcal{P}_1}$ and $\mathcal{P}_2-\braket{\mathcal{P}_2}$, with the inner product defined by $A\cdot B=\braket{\{A,B\}}/2$. From the lemma, we have
\begin{corollary}\label{co1}
	If $\braket{U_X}\geq 1-\epsilon$, then $|\braket{Z_1X_2}|$, $|\braket{X_{N-1}Z_N}|$, $|\braket{Z_iZ_{i+1}X_{i+2}}|$,  and $|\braket{X_iZ_{i+1}Z_{i+2}}|\leq\sqrt{2\epsilon}$ for $1\leq i\leq N-2$.
\end{corollary}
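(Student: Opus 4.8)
The plan is to reduce the entire statement to a single application of Lemma~\ref{lm1}, paired each time with the symmetry operator $U_X=\prod_{i\in\text{odd}}X_i$. The key observation is that every Pauli string appearing in the corollary anticommutes with $U_X$; once this is checked, Lemma~\ref{lm1} bounds its expectation value in terms of $\braket{U_X}$, and the hypothesis $\braket{U_X}\ge 1-\epsilon$ finishes the estimate.

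First I would verify the anticommutation relations by a parity count. Since $U_X$ acts as $X$ on odd sites and trivially on even sites, a Pauli string $\mathcal{P}$ anticommutes with $U_X$ if and only if the number of odd sites on which $\mathcal{P}$ carries a $Y$ or $Z$ is odd (recall a single-qubit Pauli anticommutes with $X$ exactly when it is $Y$ or $Z$). For $Z_1X_2$ the only contributing factor is $Z_1$ on the odd site $1$, giving parity $1$; similarly $X_{N-1}Z_N$ contributes only through $Z_N$ on the odd site $N$. For $Z_iZ_{i+1}X_{i+2}$ the $Z$'s sit on the consecutive sites $i,i+1$, exactly one of which is odd, so again the parity is $1$ (the $X_{i+2}$ factor never contributes); the string $X_iZ_{i+1}Z_{i+2}$ is identical after shifting, with the two $Z$'s on the consecutive sites $i+1,i+2$. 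In all four cases the count is odd, so each string anticommutes with $U_X$.

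Next I would apply Lemma~\ref{lm1} to the pair $(\mathcal{P},U_X)$ for each such $\mathcal{P}$, yielding $\braket{\mathcal{P}}^2+\braket{U_X}^2\le 1$. Using $\braket{U_X}\ge 1-\epsilon>0$ (valid since $\epsilon$ is small) I can square to get $\braket{U_X}^2\ge(1-\epsilon)^2\ge 1-2\epsilon$, whence
\begin{equation}
\braket{\mathcal{P}}^2\le 1-\braket{U_X}^2\le 2\epsilon-\epsilon^2\le 2\epsilon,
\end{equation}
so $|\braket{\mathcal{P}}|\le\sqrt{2\epsilon}$, which is the claimed bound for all four families simultaneously.

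The main obstacle is not any genuine difficulty but rather the bookkeeping in the parity argument: one must use the bipartite odd/even structure of the chain carefully to confirm that each string has an odd number of anticommuting factors with $U_X$, and in particular that the $X$ factor in each three-body term lands on the correct sublattice so as not to alter the parity. Everything after that step is the elementary squaring estimate above.
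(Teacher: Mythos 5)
Your proposal is correct and is exactly the argument the paper intends: each listed Pauli string anticommutes with $U_X$ (your parity count over odd sites is right, given that $N$ is odd), so Lemma~\ref{lm1} gives $\braket{\mathcal{P}}^2\leq 1-\braket{U_X}^2\leq 2\epsilon$. The paper presents the corollary as an immediate consequence of the lemma without spelling out these steps, so your write-up simply makes the same reasoning explicit.
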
\noindent
With this result, we are able to efficiently estimate all the stabilizers.

The second lemma asserts that estimating stabilizers is enough to certify the graph state.
\begin{lemma}\label{lm2}
	Denote $S_g$ as the stabilizer group of a stabilizer state $\ket{\psi_g}$. Then the following relations hold:
	\begin{itemize}
		\item Let $\{\mathcal{P}_i\}$ be a set of generators of $S_g$. If for all $\mathcal{P}_i$, $\braket{\mathcal{P}_i}\geq 1-\epsilon_i$, then $\bra{\psi_g}\rho\ket{\psi_g}\geq 1-\sum_i\epsilon_i/2$.
		\item If $\fd\geq 1-\epsilon$, then $\braket{\mathcal{P}}\geq1-2\sqrt{\epsilon}$ for all $\mathcal{P}\in S_g$.
	\end{itemize}
\end{lemma}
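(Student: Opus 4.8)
The plan is to handle the two bullets separately, in each case reducing a statement about the fidelity $\fd$ to an operator statement about the projector $P \equiv \ket{\psi_g}\bra{\psi_g}$ and the $\pm1$ eigenprojectors of the stabilizers. The one structural fact I would isolate at the outset is that, since the $\mathcal{P}_i$ generate $S_g$, their common $+1$ eigenspace is the one-dimensional span of $\ket{\psi_g}$, so that $P=\prod_i \Pi_i$, where $\Pi_i=\tfrac12(I+\mathcal{P}_i)$ projects onto the $+1$ eigenspace of $\mathcal{P}_i$. These projectors commute because the $\mathcal{P}_i$ do, which is what makes the factorization and the estimates below clean.

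For the first bullet I would first rewrite the hypothesis $\braket{\mathcal{P}_i}\ge 1-\epsilon_i$ as $\braket{I-\Pi_i}\le\epsilon_i/2$, using $\braket{\Pi_i}=\tfrac12(1+\braket{\mathcal{P}_i})$. The crux is then the commuting-projector union bound
\begin{equation*}
I - \prod_i \Pi_i \;\le\; \sum_i (I-\Pi_i),
\end{equation*}
which I would justify by simultaneously diagonalizing the $\Pi_i$: on each common eigenvector both sides are fixed by how many of the eigenvalues $\pi_i\in\{0,1\}$ vanish, and $1-\prod_i\pi_i\le\sum_i(1-\pi_i)$ holds eigenvalue by eigenvalue. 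Taking the expectation in $\rho$ and using $\prod_i\Pi_i=P$ gives $1-\fd=\braket{I-P}\le\sum_i\braket{I-\Pi_i}\le\sum_i\epsilon_i/2$, which is exactly the claim.

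For the second bullet, fix $\mathcal{P}\in S_g$; since $\mathcal{P}\ket{\psi_g}=\ket{\psi_g}$, the state $\ket{\psi_g}$ lies in the $+1$ eigenspace of $\mathcal{P}$ and is orthogonal to its $-1$ eigenspace, whose projector I call $\Pi_-$. The most direct route is to note that $P$ and $\Pi_-$ project onto orthogonal subspaces, so $P+\Pi_-\le I$; evaluating in $\rho$ gives $\braket{\Pi_-}\le 1-\fd\le\epsilon$, and since $\mathcal{P}=I-2\Pi_-$ this yields $\braket{\mathcal{P}}=1-2\braket{\Pi_-}\ge 1-2\epsilon$, which already implies the stated $1-2\sqrt\epsilon$ for $\epsilon\le1$. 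If one prefers to reproduce the stated constant directly, I would instead invoke the Fuchs--van de Graaf bound $\|\rho-P\|_1\le 2\sqrt{1-\fd}\le 2\sqrt{\epsilon}$ together with $|\braket{\mathcal{P}}-\braket{\psi_g|\mathcal{P}|\psi_g}|\le\|\mathcal{P}\|_\infty\,\|\rho-P\|_1$ and $\braket{\psi_g|\mathcal{P}|\psi_g}=1$.

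I expect no deep obstacle here. The two points requiring care are ensuring that $\{\mathcal{P}_i\}$ genuinely generates $S_g$ so that $\prod_i\Pi_i$ collapses to the rank-one projector $P$, and tracking the factors of two both in the $\Pi_i$-to-$\mathcal{P}_i$ conversion and in the final operator-inequality or trace-distance estimates. The commuting-projector union bound in the first part is the only structural ingredient, and it becomes elementary once the projectors are simultaneously diagonalized.
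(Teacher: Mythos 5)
Your proof is correct, and it is worth comparing to the paper's. For the first bullet you and the paper use what is really the same operator inequality in different clothing: the paper introduces the parent Hamiltonian $H=\sum_i(1-\mathcal{P}_i)$ and uses the spectral-gap bound $H\succeq 2(1-\ket{\psi_g}\bra{\psi_g})$, which, after dividing by $2$, is exactly your commuting-projector union bound $I-\prod_i\Pi_i\le\sum_i(I-\Pi_i)$ with $\Pi_i=\tfrac12(I+\mathcal{P}_i)$; your simultaneous-diagonalization justification is a perfectly good way to prove the gap statement. For the second bullet your primary route genuinely differs from the paper's: the paper invokes the Fuchs--van de Graaf inequality to pass from fidelity to trace distance and then bounds $\operatorname{Tr}\bigl((\rho-\ket{\psi_g}\bra{\psi_g})\tfrac12(1-\mathcal{P})\bigr)$, which is where the $\sqrt{\epsilon}$ comes from, whereas you simply observe that $\ket{\psi_g}\bra{\psi_g}$ and the $-1$ eigenprojector $\Pi_-$ of $\mathcal{P}$ are orthogonal, so $\braket{\Pi_-}\le 1-\fd\le\epsilon$ and hence $\braket{\mathcal{P}}\ge 1-2\epsilon$. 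This is more elementary (no trace-distance machinery) and strictly stronger than the stated $1-2\sqrt{\epsilon}$ for $\epsilon<1$; it would in fact tighten the constants propagated into Corollary~\ref{co1}-style bounds and Theorem~\ref{thm: main}. Your fallback via $\|\rho-\ket{\psi_g}\bra{\psi_g}\|_1\le 2\sqrt{\epsilon}$ reproduces the paper's argument with the trace-norm normalization handled correctly.
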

\begin{proof}
The state $\ket{\psi_g}$ is the unique ground state of the Hamiltonian $H=\sum_{i\in S_g} (1-\mathcal{P}_i)$, whose gap is 2.
When $\braket{\mathcal{P}_i}\geq 1-\epsilon_i$ holds for all $i$,
we have 
\begin{align}
	\braket{H}=n-\sum_i\braket{\mathcal{P}_i}\leq \sum_i\epsilon_i.
\end{align}
From the fact that $H\succeq 2(1-\ket{\psi_g}\bra{\psi_g})$, we have $\braket{H}\geq 2(1-\bra{\psi_g}\rho\ket{\psi_g})$. As a result,
\begin{align}
	\fd\geq 1-\sum_i\epsilon_i/2.
\end{align}

On the other hand, when
 $\fd\geq 1-\epsilon$,  the Fuchs-van de Graaf inequality relates fidelity to 1-norm $\Vert\cdot\Vert_1$(see \cite{Nielsen_Chuang_2010})
\begin{align}
	\Vert {\rho}-\ket{\psi_g}\bra{\psi_g}\Vert_1\leq \sqrt{\epsilon}.
\end{align}
So for projector $(1-\mathcal{P}_i)/2$, 
\begin{align}
		\text{Tr}\big(({\rho}-\ket{\psi_g}\bra{\psi_g})(1-\mathcal{P}_i)/2\big)
		\leq 
		\Vert {\rho}-\ket{\psi_g}\bra{\psi_g}\Vert_1\leq \sqrt{\epsilon}.
\end{align}
Using $\text{Tr}\big(({\rho}-\ket{\psi_g}\bra{\psi_g})\mathcal{P}_i\big)=\braket{\mathcal{P}_i}-1$, we have
\begin{align}
	\braket{\mathcal{P}_i}\geq 1-2\sqrt{\epsilon}.
\end{align}
\end{proof}

We note that above lemmas together give an upper bound on the infidelity.
\begin{corollary}
Let $\ket{\psi_g}$ be the 1D graph state defined in the main text.
    The infidelity with any lab state $\rho$ satisfies
    \begin{align}
        1-\bra{\psi_g}\rho\ket{\psi_g}\leq
        \frac{1}{2}\left(2(N-1)\sqrt{1-\braket{U_X}^2}-\sum_{i=1}^N(1-\braket{M_i})\right)
    \end{align}
\end{corollary}
\begin{proof}
    By Lemma~\ref{lm1}, we have that for $2\leq i\leq N-1$,
    \begin{align}
        |\braket{ZZX}|\leq \sqrt{1-\braket{U_X}^2},\quad
        |\braket{XZZ}|\leq \sqrt{1-\braket{U_X}^2},
    \end{align}
    and for $i=1,N$,
    \begin{align}
        |\braket{ZX}|\leq \sqrt{1-\braket{U_X}^2},\quad
        |\braket{XZ}|\leq \sqrt{1-\braket{U_X}^2}.
    \end{align}

    By Lemma~\ref{lm2}, we can bound the infidelity,
    \begin{align}
        1-\bra{\psi_g}\rho\ket{\psi_g} &\leq \frac{1}{2}\left(N-\sum_{i=1}^N\braket{\mathcal{P}_i}\right)\notag\\
        &\leq  \frac{1}{2}\Bigg(N-\sum_{i=2}^{N-1}\left(\braket{M_i}-2\sqrt{1-\braket{U_X}}\right)
        \notag\\
        &\quad\quad -\sum_{i=1,N}\left(\braket{M_i}-\sqrt{1-\braket{U_X}}\right)\Bigg)
        \notag\\
        &=\frac{1}{2}\left(2(N-1)\sqrt{1-\braket{U_X}^2}-\sum_{i=1}^N(1-\braket{M_i})\right)
    \end{align}
\end{proof}

The algorithm needs to estimate $\braket{U_X}$, so we also need to bound the fidelity by $\braket{U_X}$.
\begin{lemma}\label{lm3}
	If $\braket{U_X}< 1-\epsilon$, then $\fd<1-\epsilon/2$.
\end{lemma}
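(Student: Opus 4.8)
The plan is to recognize that $U_X$ is not merely a symmetry of $\rho$ but in fact an element of the stabilizer group $S_g$ of $\ket{\psi_g}$. Taking the product of the left-boundary generator $X_1Z_2$ with the odd-site bulk generators $Z_2X_3Z_4,\,Z_4X_5Z_6,\dots$ and finally the right-boundary generator $Z_{N-1}X_N$, the $Z$ operators on even sites cancel pairwise and one is left with $\prod_{i\in\text{odd}}X_i=U_X$ (this telescoping terminates cleanly precisely because $N$ is odd, so the chain closes on a boundary generator). Hence $U_X\ket{\psi_g}=\ket{\psi_g}$, and the lemma reduces to a statement about a single stabilizer: low expectation of $U_X$ forces low fidelity.

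With this in hand, I would pass to the spectral projectors of $U_X$. Since $U_X$ is a Hermitian involution (a Pauli string, so $U_X^2=I$ with eigenvalues $\pm1$), write $U_X=P_+-P_-$ with $P_\pm$ the orthogonal projectors onto its $\pm1$ eigenspaces and $P_++P_-=I$. Then $\braket{U_X}=\mathrm{Tr}(\rho U_X)=2\,\mathrm{Tr}(\rho P_+)-1$, so $\mathrm{Tr}(\rho P_+)=\tfrac12\big(1+\braket{U_X}\big)$. This converts the stabilizer expectation into the weight of $\rho$ on the $+1$ eigenspace of $U_X$.

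Next I would exploit that $\ket{\psi_g}$ is a unit vector lying in the range of $P_+$, which gives the operator inequality $\ket{\psi_g}\bra{\psi_g}\preceq P_+$. Since $\rho\succeq0$ and the trace of a product of positive semidefinite operators is nonnegative, this order is preserved under $\mathrm{Tr}(\rho\,\cdot\,)$, yielding $\fd=\mathrm{Tr}\big(\rho\,\ket{\psi_g}\bra{\psi_g}\big)\le\mathrm{Tr}(\rho P_+)=\tfrac12\big(1+\braket{U_X}\big)$. Equivalently, $\braket{U_X}\ge 2\,\fd-1$. Substituting the hypothesis $\braket{U_X}<1-\epsilon$ then gives $2\,\fd-1<1-\epsilon$, i.e. $\fd<1-\epsilon/2$, as claimed; note the strict inequality propagates directly.

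I do not expect a genuine obstacle here: once the projector bound $\fd\le\tfrac12(1+\braket{U_X})$ is set up, the conclusion is immediate, and this bound is a sharpened, single-operator specialization of the fidelity–stabilizer relations in Lemma \ref{lm2}. The only step requiring care is the first one, verifying $U_X\in S_g$ by the telescoping product of generators; everything after that is the standard argument that a single stabilizer expectation linearly lower-bounds $2\,\fd-1$.
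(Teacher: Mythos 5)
Your proof is correct and follows essentially the same route as the paper's: write $U_X = 2P_+ - 1$, use $\ket{\psi_g}\bra{\psi_g}\preceq P_+$, and conclude $\fd\leq\braket{P_+}=\tfrac12(1+\braket{U_X})<1-\epsilon/2$. The only addition is your explicit telescoping verification that $U_X\in S_g$, which the paper leaves implicit (it asserts the $\mathbb{Z}_2$ symmetry in the main text); that check is correct and harmless.
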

\begin{proof}
Let $P_\pm$ be the projector to the $\pm$ subspaces of $U_X$, respectively.
$U_X=P_+-P_-=2P_+-1$. If $\braket{U_X}<1-\epsilon$, $\braket{P_+}<1-\epsilon/2$. From $\ket{\psi_g}\bra{\psi_g}\prec P_{+}$, $\fd\leq\braket{P_{+}}<1-\epsilon/2$.
\end{proof}

With all the technical preparations, we now prove our main theorem.
\begin{theorem}
	For any $\epsilon<1/(64N^2)$, Algorithm 1 outputs \textsc{Certified} with probability at least $2/3$ when $\fd>1-\epsilon$, and outputs \textsc{Failed} with probability at least $2/3$ when $\fd<1-8N\sqrt{\epsilon}$, using a number of samples at most $T=\mathcal{O}\big(1/\epsilon^2\big)$.
\end{theorem}
\begin{proof}
The algorithm works by estimating $\braket{U_X}$ and $\braket{M_i}$ for $i=1,2,\cdots,N$, using empirical averages $\emp{U_X}$ and $\emp{M_i}$.  
If in a single execution of the algorithm, the estimation satisfies  the following error bound 
\begin{align}
    |\emp{U_X}-\braket{U_X}|&\leq \frac{5}{4}\epsilon, \notag\\
    |\emp{M_i}-\braket{M_i}|&\leq\sqrt{\epsilon},
\end{align}
 we regard it as a successful execution. 

We first show that in successful executions, the algorithm outputs the correct answer.
In the case of high fidelity, assume $\bra{\psi_g}{\rho}\ket{\psi_g}>1-\epsilon$. If Step 2 outputs \textsc{Failed}, $\emp{U_X}<1-13/4\epsilon$, so $\braket{U_X}<1-2\epsilon$. Then according to Lemma \ref{lm3}, $\bra{\psi}\rho\ket{\psi}<1-\epsilon$, contradicting the assumption. So Step 2 never outputs \textsc{Failed}. As a result, $\emp{U_X}\geq1-13/4\epsilon$ and $\braket{U_X}\geq 1-4\epsilon$. Then we have $|\braket{Z_1X_2}|$, $|\braket{X_{N-1}Z_N}|$, $|\braket{Z_iZ_{i+1}X_{i+2}}$,  and $|\braket{X_iZ_{i+1}Z_{i+2}}|\leq3\sqrt{\epsilon}$ according to Corollary \ref{co1}, for any $i=1,2,\cdots, N-2$.

According to Lemma \ref{lm2}, for the high-fidelity case, we have $\braket{X_1Z_2}$, $\braket{Z_{N-1}X_N}$, and $\braket{Z_iX_{i+1}Z_{i+2}}>1-2\sqrt{\epsilon}$ for any $i=1,2,\cdots, N-2$. So $\braket{M_i}>1-8\sqrt{\epsilon}$ for all $i$.  As a result, $\emp{M_i}>1-9\sqrt{\epsilon}$ always holds. As a result, the algorithm outputs \textsc{Certified}. This verifies the correct output in the high-fidelity case.

Then we deal with the low-fidelity case. Assume $\bra{\psi}\rho\ket{\psi}<1-8N\sqrt{\epsilon}$ and Step 2 does not correctly output \textsc{Failed}.
 According to Lemma \ref{lm2}, there must exist at least one stabilizer generator $\mathcal{P}_k$ such that $\braket{\mathcal{P}_k}<1-16\sqrt{\epsilon}$. Then we have $\braket{M_k}<1-10\sqrt{\epsilon}$. As a result, $\emp{M_k}<1-9\sqrt{\epsilon}$ always holds for this $k$, and the algorithm outputs \textsc{Failed}. This verifies the correct output in the low-fidelity case.

 The above arguments confirm that the algorithm outputs the correct result in successful executions. In the following, we bound the probability of successful executions.

Using Hoeffding's inequality, from $\braket{U_X}\in[-1,1]$ we have 
\begin{align}\label{prob1}
\text{Pr}\left(|\emp{U_X}-\braket{U_X}|>\frac{5}{4}\epsilon\right)\leq 2\exp\left(-\frac{25T\epsilon^2}{32}\right).
\end{align}
With $\braket{M_i}\in[-3,3]$, we have
\begin{align}\label{prob2}
\text{Pr}\left(|\emp{M_i}-\braket{M_i}|>\sqrt{\epsilon}\right)\leq 2\exp\left(-\frac{T\epsilon}{18}\right),
\end{align}
When 
\begin{align}
    T\geq \frac{32\log12}{25\epsilon^2},
\end{align}
we have
\begin{align}
    2\exp\left(-\frac{25T\epsilon^2}{32}\right)&\leq  \frac{1}{6};\notag\\
    2\exp\left(-\frac{T\epsilon}{18}\right)&\leq 2\exp\left(-\frac{16\log12}{225\epsilon}\right)\notag\\
    &\leq 2\exp\left(-\frac{1024\log12}{225}N^2\right).
\end{align}
Here we use $\epsilon\leq1/(64N^2)$. So by the union bound, the failure probability is 
\begin{align}
    1-\operatorname{Pr}_{\operatorname{succ}}&\leq \frac{1}{6}+2N\exp\left(-\frac{1024\log12}{225}N^2\right)\notag\\
    &\leq \frac{1}{6}+2\exp\left(-\frac{1024\log12}{225}\right)\notag\\
    &<\frac{1}{3}.
\end{align}
This verifies the $2/3$ success probability using $T=\mathcal{O}(1/\epsilon^2)$ samples.

\end{proof}

\section{Two conjectures}
For the 1d graph state with $N$ even, the same protocol no longer works. However, provided that the following conjecture is correct, we have devised a more involved protocol to identify the state. 

\textbf{Conjecture 1.} If a state can be uniformly certified under the condition that the state is a stabilizer state, the condition can be removed.

\textbf{Conjecture 2.} 
If a state can be uniformly certified under the condition that the state is pure, the condition can be removed.

The two conjectures stem from our numerical explorations of uniform certification. The intuition is that, although counting-wise uniform measurements do not seem very restrictive on a general $2^N\times2^N$ matrix, the nonlinear and highly nontrivial condition of positive-semi-definiteness exerts another layer of restriction. Conjecture 2 is weaker and, if proved, could provide confidence in Conjecture 1.

\section{Uniform measurements for stabilizer states}

In this section we consider a modified problem in which the given state is \textit{promised} to be a stabilizer state.
A celebrated known example is \textit{tomography} of stabilizer states.
Given a resource that can produce a state $\ket{\psi}$ which is promised to be a stabilizer state,
it can be proved that $O(n)$ samplings from a resource is enough for extracting full knowledge of $\ket{\psi}$ \cite{montanaroLearningStabilizerStates2017}. 
However, the protocol in \cite{montanaroLearningStabilizerStates2017} depends on
measuring in the two-qubit Bell basis and 
reading out the outcomes from single measurements rather than some expectation value,
making the protocol difficult to realize by uniform measurements.

Here we discuss \textit{certification} of a state $\ket{\psi}$
which is promised to be a stabilizer state but it is unknown whether it will be the target state $\ket{\psi}_{\text{targ}}$.
Since the set of stabilizer states is not continuous,
it is not reasonable to discuss fidelity here.
This may seem strange,
but it can be interesting, since we will see that uniform measurements can have surprising power even when $\ket{\psi}_{\text{targ}}$ is not of CSS type.

As a non-trivial example, let the target state be a 1D graph state on an even number of qubits.
The stabilizer group of such a state is generated by $X_1Z_2,Z_{N-1}X_N$, and $Z_{i-1}X_{i}Z_{i+1}$ for $2\leq i \leq N-1$.
When $N$ is even, we no longer have a stabilizer comprising only one type of Pauli operator.
To certify this state, we still measure uniformly in $X$, $Z$, $X+Z$, $X-Z$ directions, respectively.
If the state is in $\ket{\psi}$, we will get 
\begin{align}
    1 &= \braket{Z_{1}X_{2}} + \braket{X_{1}Z_{2}}, \label{eq:cl-cond-1} \\
    1 &= \braket{Z_{N-1}X_{N}} + \braket{X_{N-1}Z_{N}}, \label{eq:cl-cond-2} \\
    1 &= \braket{X_{i-1}Z_i Z_{i+1}} + \braket{Z_{i-1}X_i Z_{i+1}} + \braket{Z_{i-1}Z_i X_{i+1}}, \label{eq:cl-cond-3} \\
    0 &=\braket{X_{i}X_{i+3}}. \label{eq:cl-cond-4}
\end{align}
For each equation above, $i$ takes the value such that all indices in the equation are valid.
We now show that these conditions are sufficient for $\ket{\psi}$ to be the graph state
given the promise that it is a stabilizer state.

\begin{figure*}
    \centering
    \includegraphics[width = \linewidth]{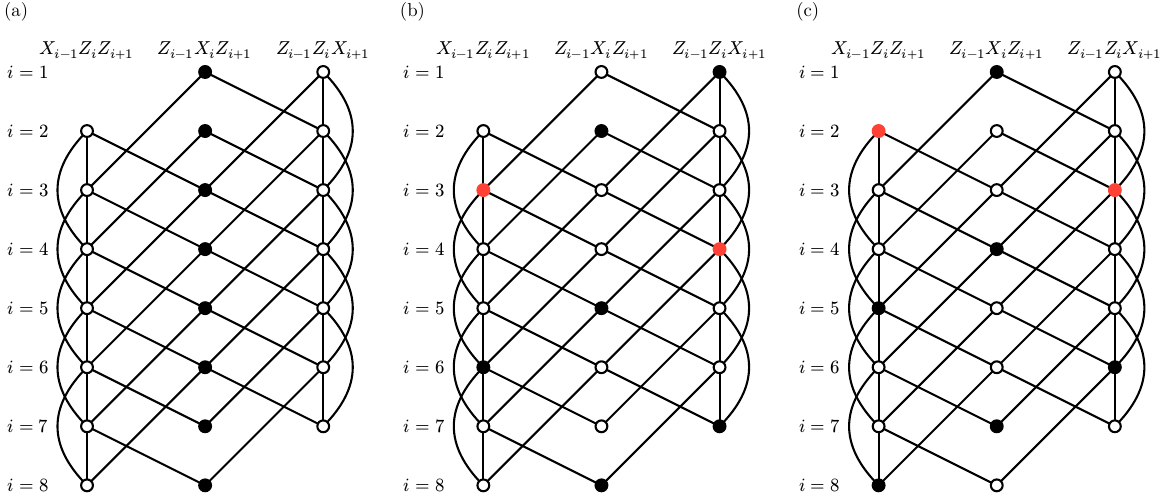}
    \caption{The only three possibilities satisfying conditions (\ref{eq:cl-cond-1}, \ref{eq:cl-cond-2}, \ref{eq:cl-cond-3}), with $N = 8$ number of qubits.
    Single Pauli strings are represented by circles,
    with anti-commuting operators connected by a link.
    For the first line, $i-1$ is an invalid index,
    and thus the two circles represent $X_1 Z_2$ and $X_2 Z_1$.
    Similarly, two circles in the last line represent $X_{N-1}Z_N$ and $Z_{N-1}X_N$.
    Solid circle indicates that the operator has expectation value 1
    and empty circle indicates expectation value 0.
    Due to our analysis in the main text, 
    the last two violate condition (\ref{eq:cl-cond-4}) (represented by red circles),
    making graph state the only possibility compatible with measurement results.
    }
    \label{fig:stab-cls}
\end{figure*}

For a stabilizer state, the expectation value of a Pauli string can only be $\pm 1$ or 0.
Moreover, if a Pauli string $P$ has expectation value $\pm 1$,
then any Pauli string that anti-commutes with $P$ has expectation value 0 according to Lemma 1.
As in Fig.~\ref{fig:stab-cls}, we can plot Pauli strings appearing in Eqs.~(\ref{eq:cl-cond-1}-\ref{eq:cl-cond-3}) as circles,
such that two Pauli strings in Eq.~(\ref{eq:cl-cond-1}) are put in the first line,
two Pauli strings in Eq.~(\ref{eq:cl-cond-2}) in the last line,
while the three-body Pauli strings in Eq.~(\ref{eq:cl-cond-3}) are put such that 
each line contains 3 Pauli strings with the same $i$.
The anti-commuting Pauli strings are connected by a link.
We now need to assign expectation values to each Pauli string in the figure.

As for the first line, we must assign a 1 and a 0 since they sum up to 1.
If we assign 1 to $X_1 Z_2$ (indicated by the solid circle), then $Z_1Z_2X_3$ must be assigned 0,
forcing the first two terms in the second line to be one 1 and one 0.
The different choice of how to assign 1 to the second line leads to two possibilities,
shown in Fig.~\ref{fig:stab-cls}(a, c).
Then other assignments are automatically fixed.
If we assign 1 to $Z_1X_2$, all other assignments are fixed as in Fig.~\ref{fig:stab-cls}(b).
We see that now we have 3 possibilities.

However, in the last two possibilities, we see that we must have a stabilizer (represented by red circles in Fig.~\ref{fig:stab-cls}(b, c))
\begin{equation}
    (X_{i-1}Z_iZ_{i+1})(Z_i Z_{i+1} X_{i+2}) = X_{i-1}X_{i+2},
\end{equation}
which should be 0 according to condition (\ref{eq:cl-cond-4}).
We thus rule out these two possibilities and confirm that the state is the graph state.

\section{More examples of certifiable states}
In this section, we provide more examples of uniformly certifiable states, both within and beyond the bipartite construction given in the main text. We will also discuss a non-example to help develop intuitions on the capacity of uniform measurement.

In the main text, we perform a thorough analysis on the 1d graph state, the simplest realization of the bipartite graph construction. This is easily generalized to graph states a 2d square lattice, see Fig.~\ref{fig:2d}(a). The rough edges on the boundary of the graph guarantee that each filled vertex has four neighbors, and thus the graph belongs to the bipartite construction. Note that both the length and width of the lattice have to be odd.

Perhaps surprisingly, graph states defined on a smooth-boundary square lattice with odd-length sides can also be uniformly certified. To see this, we partition the square lattice into three sublattices, $A, B, C$, see Fig.~\ref{fig:2d}(b). The key observation is that (i) each vertex in $B$ has even neighbors in $A$, and (ii) all vertices in $A$ and $C$ are only connected to $B$. Similar to the bipartite construction, the conditions ensure that $\prod_{a\in A}X_a$ commutes with all the stabilizers, which kills all the unwanted terms from uniform measurements hinged on $a\in A$. This result then kills the unwanted terms hinged on $b\in B$ and $c\in C$ in turn, and completes the uniform certification procedure. It is not hard to see that the condition (ii) can be generalized to a $n$-partite chain $A-B-C-...$.

Finally, we discuss a non-example where uniform measurement fails to resolve two different stabilizer states. Consider the two states $XZ|\psi\rangle=-ZX|\psi\rangle=|\psi\rangle$ and $-XZ|\phi\rangle=ZX|\phi\rangle=|\phi\rangle$. Then, by explicitly checking all symmetric operators on the two qubits, we can see that both states always give the same measurement outcomes, and thus indistinguishable with uniform measurements alone. Indeed, all states $a|\psi\rangle\langle\psi|+(1-a)|\phi\rangle\langle\phi|, 0\leq a\leq1$ are indistinguishable. This is a minimal example where local measurements are required to certify a quantum state.

\begin{figure}
    \centering
    \includegraphics{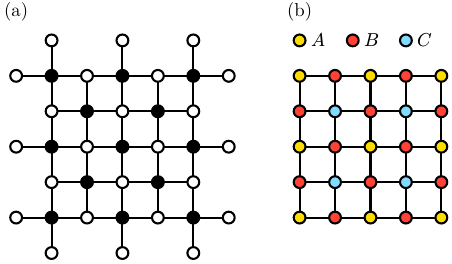}
    \caption{(a) A rough-boundary 2d lattice that belongs to the bipartite construction given in the main text. (b) A smooth-boundary 2d lattice can also be uniformly certified by dividing into a tripartite structure.}
    \label{fig:2d}
\end{figure}

\end{document}